\newcounter{MYalgorithmic}
\begin{document}

\title{Cooperative V2X for High Definition Map Transmission Based on Vehicle Mobility}
\author{\IEEEauthorblockN
{Fangfei Wang, Dong Guan, Long Zhao, \textit{Member, IEEE}, and Kan Zheng, \textit{Senior Member, IEEE} }
\IEEEauthorblockA{\\Intelligent Computing and Communication Lab,\\Key Laboratory of Universal Wireless Communication, Ministry of Education, \\Beijing University of Posts and Telecommunications,\\ Beijing, China, 100876 \\
}}
\maketitle

\begin{abstract}
High-definition (HD) map transmission is considered as a key technology for automatic driving, which enables vehicles to obtain the precise road and surrounding environment information for further localization and navigation. Guaranteeing the huge requirement of HD map data, the objective of this paper is to reduce the power consumption of vehicular networks. By leveraging the mobile rule of vehicles, a collaborative vehicle to everything (V2X) transmission scheme is proposed for the HD map transmission. Numerical results indicate that the proposed scheme can satisfy the transmission rate requirement of HD map with low power consumption.
\end{abstract}
\begin{IEEEkeywords}
High definition map, collaborative transmission, power allocation.
\end{IEEEkeywords}

\vspace{-0.4em}
\section{Introduction}
Vehicular networks is a typical application of 5G ultra-reliable and low-latency communication (URLLC). It is also an important mean to realize automatic driving in intelligent transportation systems (ITS) \cite{1}. Automatic driving technology is helpful to avoid traffic accidents and reduce traffic congestion \cite{2,3}. In order to realize automatic driving technology, the accuracy of positioning and controlling for automatic driving vehicles (ADVs) should be up to the centimeter level. Therefore, a series of sensor information should be provided for the ADVs in order to support the centimeter accuracy \cite{4}.\par

A rapid lane detection algorithm was proposed based on machine vision,  which is accurate and robust under different conditions, such as lane line missing and obstacle appearance in the track \cite{5}. However, many vision-based research challenges have not yet been solved, such as the low-definition image clarity and poor visibility in rainy, hazy weather and night conditions \cite{6}. Since vision-based automatic driving probably has safety risks in the situations mentioned above, high-definition (HD) map emerges to support automatic driving. The road information contained by HD map has enough precision to help ADVs identify the road signs with a centimeter accuracy. On the other hand, HD map also contains real-time traffic information, such as the state information of running cars, pedestrians and cyclists, which will be helpful to avoid accidents in critical situations with fast response times \cite{7}. Moreover, HD map-based vehicle localization and predictive cruise control have been studied in \cite{8,9}. However, how to transmit the HD map by wireless network is still an open problem to our best knowledge. Since HD map contains more road and traffic related information than traditional maps, the data volume is huge for the network to delivery. Therefore, it is necessary to design some transmission schemes to support high transmission rate with low network cost and latency. \par
\parskip=0pt
Vehicle-to-infrastructure (V2I) makes vehicles connect to the neworks via roadside units (RSUs) and supports high-speed short-range communications \cite{10}. Therefore, a transmission scheme based on V2I communication has been proposed through jointly optimizing the traffic flow rate and the power consumption of the network \cite{11}. On the other hand, in addition to transmitting HD map by RSU, two cars driving in the opposite direction contain the HD map information needed by each other. Therefore, vehicle-to-vehicle (V2V) communication can also be employed for HD map transmission. Besides, V2V communications have shorter communication distance than V2I, which can reduce the path loss and transmission delay \cite{12,13}. 
\par

Motivated by this, this paper studies the HD map transmission for automatic driving. Considering both the power efficiency and communication efficiency, a collaborative V2X transmission scheme is proposed in order to achieve high-speed HD map transmission with low power cost. The proposed scheme combines both V2I and V2V communications and adaptively allocates the power between the RSU and vehicle. On the other hand, a more realistic estimation expression of the transmission rate is adopted in order to reflect the influence of the decoding error probability. The simulation results indicate that the proposed transmission scheme can reduce power consumption while guaranteeing the transmission rate requirement of HD map.\par
The remainder of this paper is organized as follows. Section \uppercase\expandafter{\romannumeral2} describes the system model and formulates the HD map transmission problem with power consumption minimization. A cooperative transmission scheme is proposed based on mobile vehicle rule in Section \uppercase\expandafter{\romannumeral3}. The simulation results and analysis are given in Section \uppercase\expandafter{\romannumeral4}. Section \uppercase\expandafter{\romannumeral5} concludes this paper.
\par 
\begin{figure*}[!t]
	\centering
	\subfigure[V2I transmission only]{
		\label{Fig.sub.1}
		\includegraphics[width=0.31\textwidth]{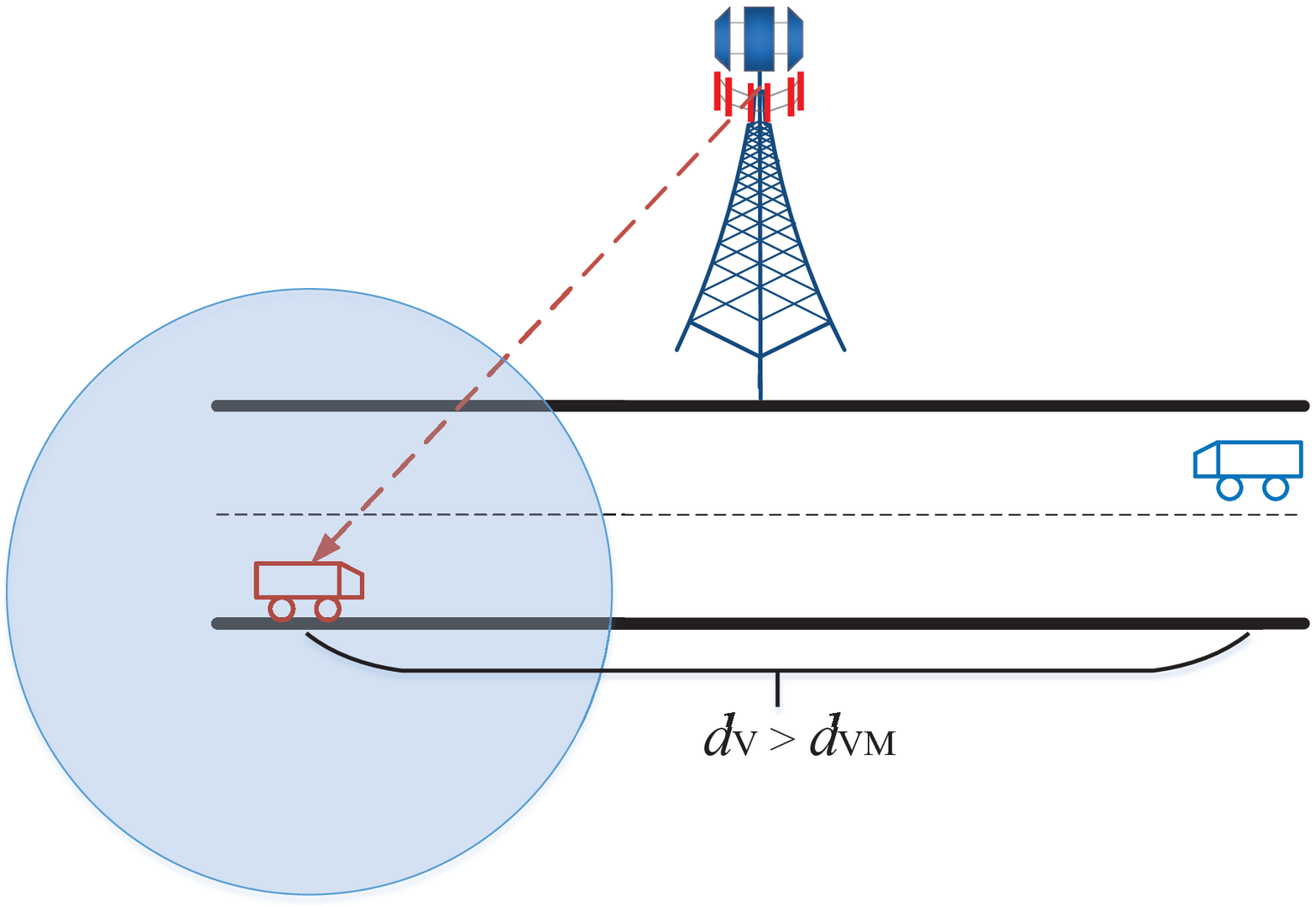}}
	\subfigure[Cooperative V2X transmission]{
		\label{Fig.sub.2}
		\includegraphics[width=0.31\textwidth]{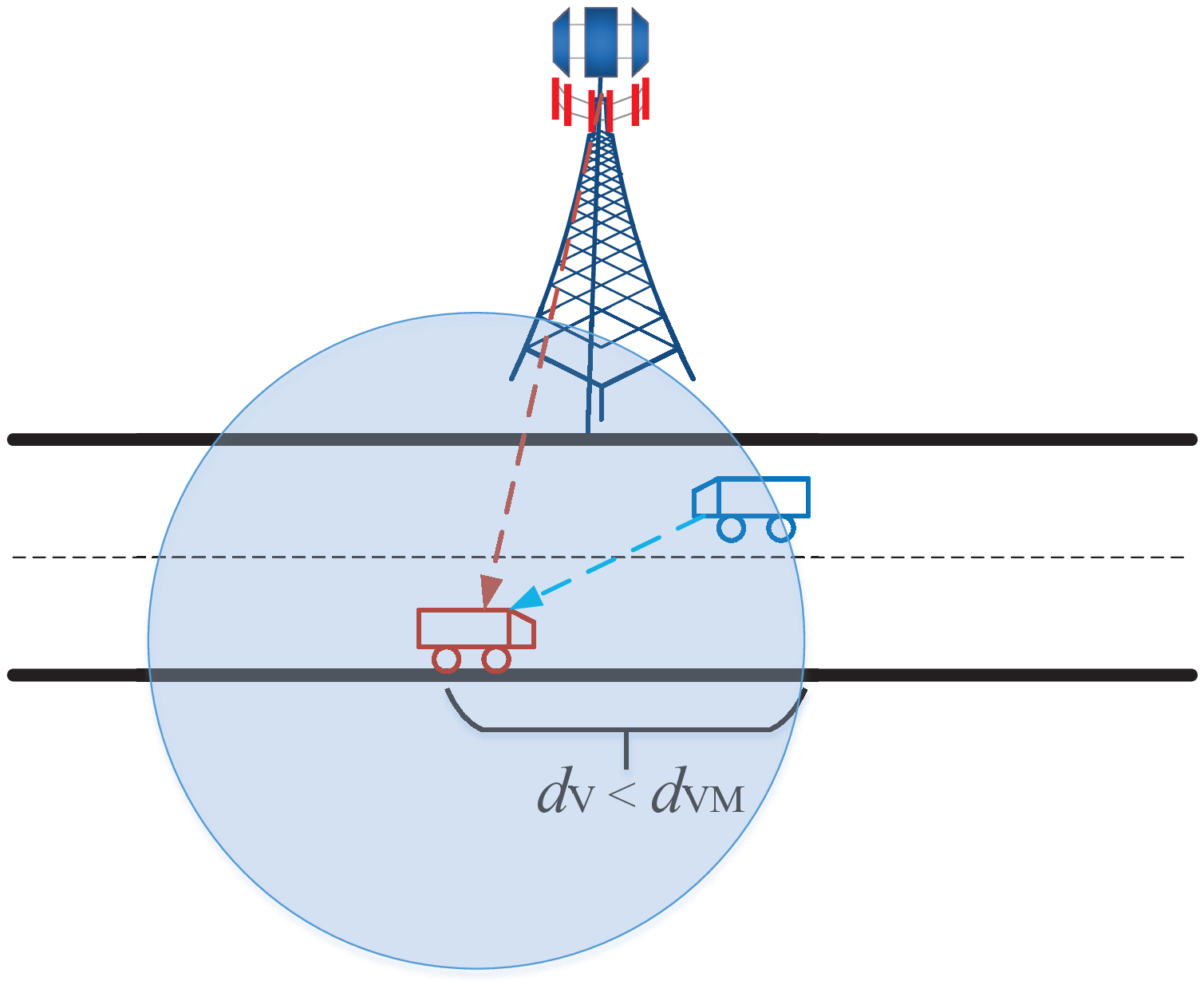}}
	\subfigure[V2V transmission only]{
		\label{Fig.sub.3}
		\includegraphics[width=0.31\textwidth]{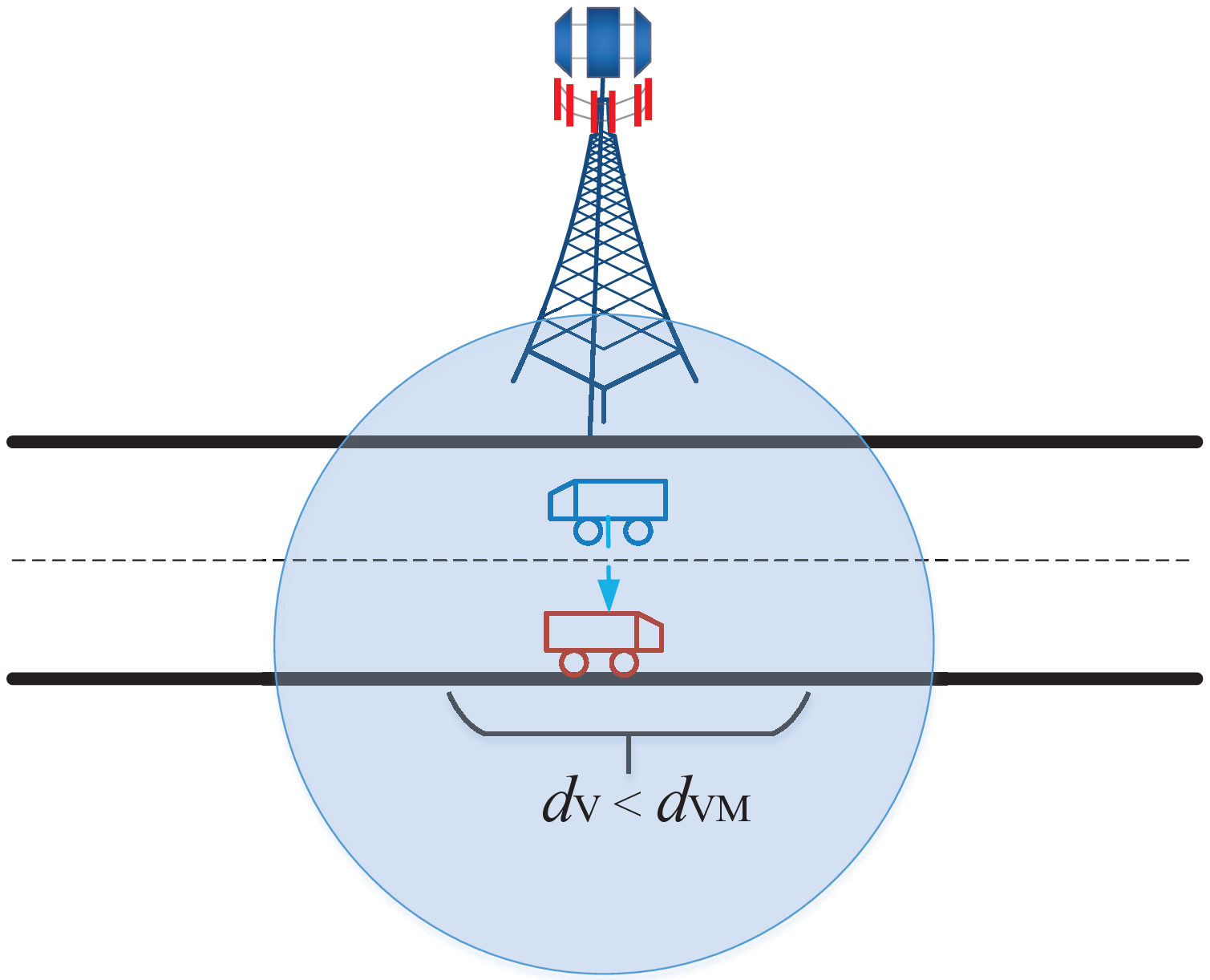}}
		\vspace{-6pt}
	\caption{Illustration of cooperative transmission of high definition map.}
	\vspace{-1.3em}
	\label{Fig.main}
\end{figure*}
\section{System Model and Problem Formulation}

\subsection{System Model}

As shown in Fig. 1, two vehicles driving in the opposite directions on an urban road, where the red one is considered as the targeted vehicle of the HD map requirement, the blue one has stored the required HD map information in the forward direction of the red vehicle. 
The different frequency bands with bandwidths $B_{\rm V}$ and $B_{\rm R}$ are employed by V2V and V2I in order to avoid interference. Then, different parts of HD map information can be simultaneously transmitted by the blue vehicle and the RSU for the targeted vehicle. Moreover, it is assumed that the HD map data volume per meter $q$ bits/m follows truncated Gaussian distribution, i.e., $q \sim \mathbb{N}(\mu,{{\sigma} ^2},0,+ \infty )$, where $\mu$ and ${\sigma}$ are the mean and variance, and the data volume is positive. When the vehicle speed is $v$ m/s, the transmission rate of the networks at least is $vq$ bits/s.\par

We assume that ${d_{{\rm{V}}}}$ denotes the distance between the two vehicles and ${d_{{\rm{VM}}}}$ indicates the maximum V2V communication range. As shown in Fig. 1, three cases of transmission schemes are then discussed based on the position relationship of the two vehicles. In Fig. 1(a), only RSU transmits the HD map information to the targeted vehicle when ${d_{{\rm{V}}}}>{d_{{\rm{VM}}}}$. In Fig. 1(b), when ${d_{{\rm{V}}}}<{d_{{\rm{VM}}}}$, both RSU and vehicle transmit the HD map information to the targeted vehicle. In Fig. 1(c),  ${d_{{\rm{V}}}}$ becomes so small that the HD map transmission rate can be satisfied only by the V2V communication. Next, the channel models, transmission model and transmission rate are given in details.
\subsubsection{Channel Model}
Denote $\phi_0 {\it{d}_{\rm{V}}^{ - \alpha }}$ and $\phi_0 {\it{d}_{\rm{R}}^{ - \alpha }}$ as the gains of downlink large-scale channels for V2V and V2I, respectively, where $\phi_0$ is a channel constant related to the antenna gain and carrier frequency,  $\alpha$ is the path-loss exponent, $d_{\rm R}$ represents the distance between the red vehicle and RSU. On the other hand, denote ${h}_{\rm V}$ and ${h}_{\rm R}$ as the fast fading coefficients of the V2V and V2I channels, respectively, and  ${h}_{\rm V}$, ${h}_{\rm R} \sim \mathbb{CN}(0,1)$. Then, the V2V and V2I channels can be expressed as ${g_{\rm V}} \rm{ = } \phi_0 {\it{d}_{\rm V}^{ - \alpha }}{\it h}_{\rm V}$ and ${g_{\rm R}} \rm{ = } \phi_0 {\it{d}_{\rm{R}}^{ - \alpha }} \it{h}_{\rm R}$, respectively.
\subsubsection{Transmission Model}
Assuming that the transmit power values of the blue vehicle and RSU are $p_{\rm V}$ and $p_{\rm R}$, and $s_{\rm V}$ with $\left| s_{\rm V} \right| \rm{ = } 1 $ and $s_{\rm R}$ with $\left| s_{\rm R} \right| \rm{ = } 1 $ represent the transmitted symbols from the blue vehicle and RSU, respectively. Then the transmission signals of the blue vehicle and RSU can be written respectively by
\vspace{-0.5em}
\begin{align}
\vspace{-0.3em}
&x_{\rm V} { = } \sqrt{p_{\rm{V}}}s_{\rm V},\\
&x_{\rm R} { = } \sqrt{p_{\rm{R}}}s_{\rm R},
\end{align}
and the received signals of the red vehicle are respectively given by
\vspace{-0.5em}
\begin{align}
\vspace{-0.5em}
\notag y_{\rm V} &= {{g}_{\rm V} \it {x_{\rm V}  }+n_{\rm V}}\\
 &={{{\phi_0 {\it{d}_{\rm V}^{ - \alpha }}{{h}_{\rm V}}\sqrt {p_{\rm{V}}} {s_{\rm V}}}  + n_{\rm V}}},\\
\notag y_{\rm R} &= {{g}_{\rm R} \it {x_{\rm R}  }+n_{\rm R}}\\
&={{{\phi_0 {\it{d}_{\rm R}^{ - \alpha }}{{h}_{\rm R}}\sqrt {p_{{\rm R}}} {s_{\rm R}}}  + n_{\rm R}}},
\end{align}
where the additive white Gaussian noises $n_{\rm V}\sim \mathbb{CN}(0,{B_{\rm V}N_0}), n_{\rm R} \sim \mathbb{CN}(0,{B_{\rm R}N_0})$. Therefore, the signal to noise ratios (SNRs) of the red vehicle can be expressed respectively as
\begin{align}
&\rho_{\rm V}= {\frac{{{\phi _0}d_{\rm{V}}^{ - \alpha }{h_{\rm{V}}}{p_{\rm{V}}}}}{{{\phi _1}{N_0}{B_{\rm{V}}}}}},\\
&\rho_{\rm R}= {\frac{{{\phi _0}d_{\rm{R}}^{ - \alpha }{h_{\rm{R}}}{p_{\rm{R}}}}}{{{\phi _1}{N_0}{B_{\rm{R}}}}}},
\end{align}
where $\phi _1$ is a SNR loss coefficient due to non-ideal channel state information at the transmitter.
\par
\subsubsection{Transmission rate}
Considering the decoding error probability, the transmission rates of vehicle and RSU can be expressed respectively as 
\begin{align}
{r_{\rm{V}}}\! =\! \frac{{{B_{\rm{V}}}}}{{\ln 2}}\!\left[\! {\ln \!\left(\! {1\!+\!\frac{{{\phi _0}d_{\rm{V}}^{\! -\! \alpha }{h_{\rm{V}}}{p_{\rm{V}}}}}{{{\phi _1}{N_0}{B_{\rm{V}}}}}} \!\right) \!\!-\!\! \sqrt {\frac{1}{{\tau {B_{\rm{V}}}}}} Q_{\rm{G}}^{ \!- \!1}\!\left(\! {\varepsilon _{\rm{V}}^{{\rm{C,V}}}}\! \right)}\! \right]\!\!,
\end{align}
\begin{align}
{r_{\rm{R}}}\! = \!\frac{{{B_{\rm{R}}}}}{{\ln 2}}\!\left[\! {\ln \!\left(\! {1 \!+\! \frac{{{\phi _0}d_{\rm{R}}^{\! -\! \alpha }{h_{\rm{R}}}{p_{\rm{R}}}}}{{{\phi _1}{N_0}{B_{\rm{R}}}}}}\! \right)\! \!-\!\! \sqrt {\frac{1}{{\tau {B_{\rm{R}}}}}} Q_{\rm{G}}^{ \!- \!1}\!\left(\! {\varepsilon _{\rm{R}}^{{\rm{C,R}}}} \!\right)}\! \right]\!\!,
\end{align}
where $\tau$ is the duration of transmission, $ {\varepsilon _{\rm{V}}^{{\rm{C,V}}}}$ and $ {\varepsilon _{\rm{R}}^{{\rm{C,R}}}} $　are decoding error probabilities in downlink of V2V channel and V2I channel, respectively, and $Q_{\rm{G}}^{ - 1}\left( x \right)$ denotes the inverse of the Gaussian Q-function \cite{14,15}.
\par

\subsection{Problem Formulation}

Based on the proposed collaborative V2X transmission scheme, the objective of this paper is to minimize the total power consumption of both the RSU and vehicles as well as ensure the transmission rate requirement of the HD map. Therefore, the optimization problem of this paper can be formulated as \par
\vspace{-1em}
\begin{align}
&\mathop {\min }\limits_{{p_{\rm{R}}},{p_{\rm{V}}}} \left\{ {{p_{\rm{R}}} + {p_{\rm{V}}}} \right\}\\
{\rm{s}}{\rm{.t}}&{\rm{.  0}} \le {p_{\rm{R}}} \le {p_{{\rm{RM}}}},\\
&{\rm{       0}} \le {p_{\rm{V}}} \le {p_{{\rm{VM}}}},\\
&{\rm{       P}}\left\{ {{r_{\rm{R}}} + {r_{\rm{V}}} < qv} \right\} \le \delta ,\\
&{\rm{       0}} \le {d_{\rm{V}}} \le {d_{{\rm{VM}}}},
\end{align}
where (10) and (11) are the transmit power constraints of vehicle and RSU, ${p_{{\rm{RM}}}}$ and ${p_{{\rm{VM}}}}$ denote the maximum transmit power values of RSU and vehicle, respectively, (12) describes the outage probability requirement of HD map transmission and $\delta$ is the maximum violation probability.
\par

\section{Cooperative power allocation for HD Map Transmission}
In order to solve the formulated problem described in (9)-(13), we need to transform the transmission rate outage constraint (12) into a constraint  of $p_{\rm{R}}$ or $p_{\rm{V}}$ by considering the distribution of the HD map volume $q$. Since the expression of tansmission rate in (7) or (8) contains a constant term, (12) has different expressions in different situations, i.e.
\vspace{-0.5em}
\begin{align}
	&{{\rm{P}}\left\{ {{r_{\rm{R}}} < qv} \right\} \le \delta ,}\\
	&{{\rm{ P}}\left\{ {{r_{\rm{V}}} < qv} \right\} \le \delta ,{\rm{ }}}\\
	&{{\rm{ P}}\left\{ {{r_{\rm{R}}} + {r_{\rm{v}}} < qv} \right\} \le \delta .}
\end{align} 
We will solve the problem in (9)-(13) under the conditions (14)-(16), respectively, i.e., V2I transmission only, V2V transmission only, and cooperative V2X transmission. Then, we can obtain three suboptimal power allocation schemes. Finally, the optimal power allocation can be obtained by comprehensively considering the sum transmit power values of both RSU and vehicle.

\subsection{Suboptimal Power Allocation}
\newtheoremstyle{mythm}{}{}{\normalfont}{}{ \bfseries \it}{\normalfont:}{.5em}{}
\theoremstyle{mythm}
\newtheorem{theorem}{\quad \it Proposition} 
\newtheorem{Commen}{\quad \it Comment } 
\renewenvironment{proof}{{\it Proof \normalfont:}}{ \hfill $\blacksquare$ }
\subsubsection{V2I transmission only}
We have the following results. 
\begin{theorem}
When the HD map is transmitted by V2I only, i.e., $p_{\rm{V}}=0$, the power allocation of the vehicle and the RSU is given by 
\begin{align}
{\psi _1} = \left( {{p_{\rm{V}}},{p_{\rm{R}}}} \right),
\end{align}
where
\begin{align}
\notag {p_{\rm{R}}} =& \!\left\{\! {\frac{{{\phi _1}{N_0}{B_{\rm{R}}}}}{{{\phi _0}d_{\rm{R}}^{ \!-\! \alpha }{h_{\rm{R}}}}}\!\!\left[\! {\exp \!\!\left(\! {\frac{{v\!\!\left(\! {\sigma {\Phi ^{ \!-\!1}}\!\left(\! {1 \!-\! \delta \!\left[\! {1\!-\! \Phi \!\left(\! { \!-\!{\textstyle{\mu  \over \sigma }}} \!\right)\!} \right]\!} \right)\! \!+\! \mu } \!\right)\!\ln 2}}{{{B_{\rm{R}}}}}} \right.} \right.} \right.\\
&\left. {\left. {\left. { \!+\! \sqrt {\frac{1}{{\tau {B_{\rm{R}}}}}} Q_{\rm{G}}^{ \!-\! 1}\!\left(\! {\varepsilon _{\rm{R}}^{{\rm{C,R}}}} \!\right)}\! \right)\! \!-\! 1} \!\right]\!} \right\}_0^{{p_{{\rm{RM}}}}}\!\!\!\!,
\end{align}
where $\Phi \left( x \right)$ is the cumulative distribution function of standard normal distribution, and $\{x\}_a^b=\min \{b, \max \{a,x\}\}$.
\end{theorem}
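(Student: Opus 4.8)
The plan is to convert the probabilistic outage constraint (14) into an explicit lower bound on $p_{\rm R}$, and then to observe that the power-minimizing choice saturates this bound subject to the box constraint (11). First I would note that for a fixed $p_{\rm R}$ the rate $r_{\rm R}$ in (8) is deterministic, so the only randomness in (14) stems from the per-meter data volume $q$. Rewriting the event $\{r_{\rm R} < qv\}$ as $\{q > r_{\rm R}/v\}$, the constraint becomes ${\rm P}\{q > r_{\rm R}/v\} \le \delta$.

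Next I would substitute the cumulative distribution function of the truncated Gaussian $q \sim \mathbb{N}(\mu,\sigma^2,0,+\infty)$. Writing this CDF as $F(x) = [\Phi((x-\mu)/\sigma) - \Phi(-\mu/\sigma)]/[1 - \Phi(-\mu/\sigma)]$, the outage requirement reads $1 - F(r_{\rm R}/v) \le \delta$, i.e. $F(r_{\rm R}/v) \ge 1 - \delta$. Clearing the positive denominator $1 - \Phi(-\mu/\sigma)$ and collecting the $\Phi(-\mu/\sigma)$ terms, this collapses to $\Phi((r_{\rm R}/v - \mu)/\sigma) \ge 1 - \delta[1 - \Phi(-\mu/\sigma)]$. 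Applying the strictly increasing inverse $\Phi^{-1}$ to both sides then gives the equivalent rate threshold $r_{\rm R} \ge v[\sigma\Phi^{-1}(1 - \delta[1 - \Phi(-\mu/\sigma)]) + \mu]$.

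Finally I would insert the closed-form rate (8) into this inequality and solve for $p_{\rm R}$. Isolating the logarithm, exponentiating, and inverting the affine SNR map yields $p_{\rm R} \ge \frac{\phi_1 N_0 B_{\rm R}}{\phi_0 d_{\rm R}^{-\alpha} h_{\rm R}}\left[\exp\left(\frac{v(\sigma\Phi^{-1}(\cdot) + \mu)\ln 2}{B_{\rm R}} + \sqrt{\frac{1}{\tau B_{\rm R}}}\, Q_{\rm G}^{-1}(\varepsilon_{\rm R}^{\rm C,R})\right) - 1\right]$, whose right-hand side is exactly the unclamped quantity inside the braces of (18). Since the objective (9) is strictly increasing in $p_{\rm R}$ (with $p_{\rm V} = 0$), the minimum is attained at equality, i.e. at the smallest feasible $p_{\rm R}$; projecting this value onto the admissible interval $[0, p_{\rm RM}]$ from (11) through the clipping operator $\{\cdot\}_0^{p_{\rm RM}}$ delivers (18). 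I expect the truncated-Gaussian step to be the main obstacle: one must normalize the CDF correctly on $[0,+\infty)$ and track the direction of every inequality through $\Phi^{-1}$ so that the argument $1 - \delta[1 - \Phi(-\mu/\sigma)]$ emerges in precisely that form; the remaining manipulations are routine algebra.
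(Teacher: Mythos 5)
Your proposal is correct and follows essentially the same route as the paper: rewrite the outage event as a tail probability of the truncated Gaussian $q$, reduce it to $\Phi((\kappa-\mu)/\sigma) \ge 1-\delta[1-\Phi(-\mu/\sigma)]$, invert the rate expression to obtain the lower bound on $p_{\rm R}$, and clip to $[0,p_{\rm RM}]$. The only cosmetic difference is that you work with the truncated CDF directly while the paper integrates the truncated PDF to obtain the same tail expression, and you make explicit the (correct) observation that the objective forces equality at the lower bound, which the paper leaves implicit.
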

\begin{proof}
To obtain the power allocation under the conditions (14),
we need to transform the transmission rate constraint (12). The probability density function of $q$ can be given by
\begin{align}
\!\!f\left( q \right) &\!=\! \frac{{{\textstyle{1 \over {{\sigma }}}}{f_N}\left( {{\textstyle{{q \!-\! \mu } \over {{\sigma }}}}} \right)}}{{1 \!-\! \Phi \left( { \!-\! {\textstyle{\mu  \over {{\sigma }}}}} \right)}}\!\!=\!\!\frac{{\exp \left( { \!-\! \frac{1}{2}{{\left( {{\textstyle{{q\!-\!\mu } \over {{\sigma }}}}} \right)}^2}} \right)}}{{\sqrt {2\pi } {\sigma}\left[ {1 \!-\! \Phi \left( {\!-\!{\textstyle{\mu  \over {{\sigma}}}}} \right)} \right]}},
\end{align}
where ${f_N}\left( x \right)$ is the probability density function of standard normal distribution. Then, the left side of inequality (12) can be transformed into
\begin{align}
\notag &{\rm{P}}\!\left\{\!\! {\frac{{{B_{\rm{R}}}}}{{\ln 2}}\!\!\left[\! {\ln \!\!\left(\!\! {1 \!+\! \frac{{{\phi _0}d_{\rm{R}}^{ \!-\! \alpha }{h_{\rm{R}}}{P_{\rm{R}}}}}{{{\phi _1}{N_0}{B_{\rm{R}}}}}}\! \!\right)\! \!\!-\!\! \sqrt {\frac{1}{{\tau {B_{\rm{R}}}}}} Q_{\rm{G}}^{ \!-\! 1}\!\left(\! {\varepsilon _{\rm{R}}^{{\rm{C,R}}}} \!\right)}\! \right]\!\! <\! qv} \!\right\}\\
\!=&{\rm{P}}\!\left\{\!\! {q \!>\!\! \frac{{{B_{\rm{R}}}}}{{v\ln 2}}\!\!\left[\! {\ln \!\!\left(\!\! {1 \!+\! \frac{{{\phi _0}d_{\rm{R}}^{ \!-\! \alpha }{h_{\rm{R}}}{P_{\rm{R}}}}}{{{\phi _1}{N_0}{B_{\rm{R}}}}}} \!\!\right)\! \!-\!\! \sqrt {\frac{1}{{\tau {B_{\rm{R}}}}}} Q_{\rm{G}}^{ \!-\! 1}\!\left(\! {\varepsilon _{\rm{R}}^{{\rm{C,R}}}}\! \right)\!} \!\right]\!\!} \right\}\!\!\!.
\end{align}\par
In order to simplify the expression, we define the constant 
\begin{align}
\kappa {\rm{ = }} \frac{{{B_{\rm{R}}}}}{{v\ln 2}}\!\!\left[\! {\ln \!\!\left(\!\! {1 \!+\! \frac{{{\phi _0}d_{\rm{R}}^{ \!-\! \alpha }{h_{\rm{R}}}{P_{\rm{R}}}}}{{{\phi _1}{N_0}{B_{\rm{R}}}}}} \!\!\right)\! \!-\!\! \sqrt {\frac{1}{{\tau {B_{\rm{R}}}}}} Q_{\rm{G}}^{ \!-\! 1}\!\left(\! {\varepsilon _{\rm{R}}^{{\rm{C,R}}}}\! \right)\!} \!\right]\!\!.
\end{align}
Then, based on (19) and (21), the outage probability in (20) can be expressed as
\begin{align}
\notag \int_\kappa ^{\! +\! \infty }\!\! {f\!\!\left( q \right)\!{\rm d}q} =& \int_\kappa ^{ \!+\! \infty } \!\!{\frac{{\exp \left( { \!-\! \frac{1}{2}{{\!\left( \!{{\textstyle{{q\! -\! \mu } \over {{\sigma }}}}}\! \right)\!}^2}} \right)}}{{\sqrt {2\pi } {\sigma }\left[ {1 \!-\! \Phi \!\left(\! { - {\textstyle{\mu  \over {{\sigma }}}}}\! \right)\!} \right]}}{\rm d}q} \\
=& \frac{{1 \!-\! \Phi \!\left(\! {{\textstyle{{\kappa  \!-\! \mu } \over {{\sigma}}}}} \!\right)\!}}{{1 \!-\! \Phi \!\left(\! { \!-\! {\textstyle{\mu  \over {{\sigma }}}}} \!\right)\!}},
\end{align}
and therefore (12) can be rewritten as
\begin{align}
\frac{{1 \!-\! \Phi \!\left(\! {{\textstyle{{\kappa  \!-\! \mu } \over {{\sigma }}}}} \!\right)\!}}{{1 \!-\! \Phi \!\left(\! { \!-\! {\textstyle{\mu  \over {{\sigma }}}}} \!\right)\!}} \le \delta .
\end{align}
Further, taking into account equation (21), constraint (12) can be transformed into
\begin{align}
\notag {p_{\rm{R}}} \!\ge\! \!\frac{{{\phi _1}{N_0}{B_{\rm{R}}}}}{{{\phi _0}d_{\rm{R}}^{ - \alpha }{h_{\rm{R}}}}}\!\!&\left[ \!{\exp\! \left(\!\! {\frac{{v\!\left(\! {\sigma {\Phi ^{ \!-\! 1}}\!\!\left(\! {1 \!-\! \delta\! \left[\! {1\! - \!\Phi \!\left(\! {\! - \!{\textstyle{\mu  \over \sigma }}}\! \right)} \!\right]\!} \right) \!+\! \mu } \!\right)\!\ln 2}}{{{B_{\rm{R}}}}}} \right.} \right.\\
&\left. {\left. {\! +\! \sqrt {\frac{1}{{\tau {B_{\rm{R}}}}}} Q_{\rm{G}}^{ \!-\! 1}\!\left(\! {\varepsilon _{\rm{R}}^{{\rm{C,R}}}}\! \right)} \!\right)\!\! - \!1} \right].\end{align} 
\par
Considering constraints (10) and (11), the power allocation under V2I transmission only can be given by (17). 
\end{proof}

\subsubsection{V2V transmission only}
We have the following results. 
\begin{theorem}
When the HD map is transmitted by V2V only, i.e., $p_{\rm R}=0$, the power allocation of the vehicle and the RSU is given by 	
\begin{align}
{\psi _2} = \left( {{p_{\rm{V}}},{p_{\rm{R}}}} \right),
\end{align}
where  
\begin{align}
\notag {p_{\rm{V}}} =& \!\left\{\! {\frac{{{\phi _1}{N_0}{B_{\rm{V}}}}}{{{\phi _0}d_{\rm{V}}^{ \!-\! \alpha }{h_{\rm{V}}}}}\!\!\left[\! {\exp \!\!\left(\! {\frac{{v\!\!\left(\! {\sigma {\Phi ^{ \!-\!1}}\!\left(\! {1 \!-\! \delta \!\left[\! {1\!-\! \Phi \!\left(\! { \!-\!{\textstyle{\mu  \over \sigma }}} \!\right)\!} \right]\!} \right)\! \!+\! \mu } \!\right)\!\ln 2}}{{{B_{\rm{V}}}}}} \right.} \right.} \right.\\
&\left. {\left. {\left. { \!+\! \sqrt {\frac{1}{{\tau {B_{\rm{V}}}}}} Q_{\rm{G}}^{ \!-\! 1}\!\left(\! {\varepsilon _{\rm{V}}^{{\rm{C,V}}}} \!\right)}\! \right)\! \!-\! 1} \!\right]\!} \right\}_0^{{p_{{\rm{VM}}}}}\!\!\!\!.
\end{align}
\end{theorem}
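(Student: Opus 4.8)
The plan is to mirror the argument used for Proposition~1, since the V2V-only case is the exact structural dual of the V2I-only case: every step is obtained by interchanging the subscript $\rm R$ with $\rm V$ and replacing the RSU rate~(8) by the vehicle rate~(7). I would begin from the relevant outage condition~(15), namely ${\rm P}\{r_{\rm V} < qv\} \le \delta$, and reuse the truncated-Gaussian density of $q$ given in~(19), which does not depend on whether the RSU or the blue vehicle is the transmitter.

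First I would recast the event $r_{\rm V} < qv$ as a tail event for $q$. Because $r_{\rm V}$ is deterministic once $p_{\rm V}$ is fixed, dividing by $v$ converts the constraint into $q > \kappa_{\rm V}$, where
\begin{align}
\kappa_{\rm V} = \frac{B_{\rm V}}{v \ln 2}\left[\ln\left(1 + \frac{\phi_0 d_{\rm V}^{-\alpha} h_{\rm V} p_{\rm V}}{\phi_1 N_0 B_{\rm V}}\right) - \sqrt{\frac{1}{\tau B_{\rm V}}} Q_{\rm G}^{-1}\left(\varepsilon_{\rm V}^{\rm C,V}\right)\right].
\end{align}
Integrating the density~(19) over $[\kappa_{\rm V}, +\infty)$ exactly as in~(22) gives the outage probability in closed form, so that~(15) collapses to the V2V analogue of~(23),
\begin{align}
\frac{1 - \Phi\left(\frac{\kappa_{\rm V} - \mu}{\sigma}\right)}{1 - \Phi\left(-\frac{\mu}{\sigma}\right)} \le \delta.
\end{align}

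The key step, and the one that warrants the most care, is inverting this inequality to isolate $p_{\rm V}$. Rearranging yields $\Phi\left(\frac{\kappa_{\rm V}-\mu}{\sigma}\right) \ge 1 - \delta\left[1 - \Phi\left(-\tfrac{\mu}{\sigma}\right)\right]$; since $\Phi^{-1}$ is monotone increasing, this gives the lower bound $\kappa_{\rm V} \ge \sigma\,\Phi^{-1}\!\left(1 - \delta\left[1 - \Phi\left(-\tfrac{\mu}{\sigma}\right)\right]\right) + \mu$. Substituting the definition of $\kappa_{\rm V}$ and solving the resulting relation for $p_{\rm V}$, using the monotonicity of $\ln$ and $\exp$ to preserve the inequality direction, produces a lower bound on $p_{\rm V}$ of exactly the exponential form appearing inside the braces of~(26). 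Finally, imposing the box constraint~(11), $0 \le p_{\rm V} \le p_{\rm VM}$, reduces to projecting this lower bound onto $[0, p_{\rm VM}]$ through the clipping operator $\{\cdot\}_0^{p_{\rm VM}}$, and selecting the smallest feasible $p_{\rm V}$ is power-minimizing precisely because the objective $p_{\rm R} + p_{\rm V}$ is increasing in $p_{\rm V}$. This yields the claimed allocation~(25)--(26). The only point beyond Proposition~1 worth double-checking is that none of these monotone inversions flips the inequality, which is immediate here.
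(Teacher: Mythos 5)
Your proposal is correct and follows exactly the route the paper intends: the paper's own proof of this proposition is simply the remark that it is ``similar to that of \emph{Proposition}~1,'' and your argument is precisely that symmetric adaptation (swap the subscript $\rm R$ for $\rm V$, use rate~(7) and constraint~(15), invert the truncated-Gaussian tail, and clip to $[0,p_{\rm VM}]$). No gaps; the monotonicity checks you flag are indeed the only points needing care, and they go through.
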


\begin{proof}
The proof is similar to that of {\it Proposition} 1 and therefore is ignored.
\end{proof}

\subsubsection{Cooperative V2X transmission}We have the following results. 
\begin{theorem}
When the HD map is cooperatively transmitted by V2V and V2I, the power allocation of vehicle and RSU is given by	
\begin{align}	
{\psi _3}= \left( {{{p}_{\rm{V}}},{p_{\rm{R}}}} \right),
\end{align}
where
\begin{align}
&{p_{\rm{V}}}=\!{{\frac{{{\phi _1}{N_0}{B_{\rm{V}}}}}{{{\phi _0}d_{\rm{V}}^{ \!-\! \alpha }{h_{\rm{V}}}}}\!\!\!\left[ \!\!{{{\left(\! {\frac{{d_{\rm{R}}^{ -\! \alpha }{h_{\rm{R}}}}}{{d_{\rm{V}}^{ -\! \alpha }{h_{\rm{V}}}{{\rm e}^{{\chi}}}}}} \!\right)}^{ \!-\! \frac{{{B_{\rm{R}}}}}{{{B_{\rm{V}}} \!+\! {B_{\rm{V}}}}}}} \!\!\!\!\!\!-\! 1}\! \right]_0^{{p_{{\rm{VM}}}}}} }\! \!\!\!\!,\\
&{p_{\rm{R}}}=\!\! {{\frac{{{\phi _1}{N_0}{B_{\rm{R}}}}}{{{\phi _0}d_{\rm{R}}^{ -\! \alpha }{h_{\rm{R}}}}}\!\!\left[\! {{e^{{\chi}}}{{\left(\! {\frac{{d_{\rm{R}}^{ -\! \alpha }{h_{\rm{R}}}}}{{d_{\rm{V}}^{ \!-\! \alpha }{h_{\rm{V}}}{{\rm e}^{{\chi }}}}}} \!\right)}^{\frac{{{B_{\rm{V}}}}}{{{B_{\rm{V}}} \!+\! {B_{\rm{R}}}}}}} \!\!\!\!\!\!-\! 1} \!\right]_0^{{p_{{\rm{RM}}}}}}\!\!}\!\!.
\end{align}
\end{theorem}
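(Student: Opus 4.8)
The plan is to reduce problem (9)--(13) under condition (16) to a deterministic convex program and then solve it by the Karush--Kuhn--Tucker (KKT) conditions, reusing the probabilistic step of \emph{Proposition}~1. First I would convert the outage constraint $\mathrm{P}\{r_{\rm R}+r_{\rm V}<qv\}\le\delta$ into a deterministic bound. Because $r_{\rm R}+r_{\rm V}$ does not depend on $q$, the event equals $\{q>(r_{\rm R}+r_{\rm V})/v\}$, so integrating the truncated Gaussian density (19) exactly as in (22)--(23) turns (16) into $r_{\rm R}+r_{\rm V}\ge v\big(\sigma\Phi^{-1}(1-\delta[1-\Phi(-\mu/\sigma)])+\mu\big)=:v\mu^{*}$. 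Writing $u=1+\rho_{\rm R}$ and $w=1+\rho_{\rm V}$ and substituting (7)--(8), this becomes $B_{\rm R}\ln u+B_{\rm V}\ln w\ge C$, where the constant $C$ collects the required rate $v\mu^{*}$ and the two blocklength penalties $\sqrt{1/(\tau B)}\,Q_{\rm G}^{-1}(\varepsilon)$; the quantity $\chi=C/B_{\rm R}$ is what appears as $\mathrm e^{\chi}$ in (29)--(30).

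Next I would note that minimizing the linear objective $p_{\rm R}+p_{\rm V}$ over the box $[0,p_{\rm RM}]\times[0,p_{\rm VM}]$ intersected with the superlevel set $\{B_{\rm R}\ln u+B_{\rm V}\ln w\ge C\}$ is a convex program, since $\ln(1+\rho)$ is concave so the feasible set is convex; hence the KKT conditions are necessary and sufficient. Monotonicity of the rates in the powers forces the rate constraint to bind at the optimum, so I impose $B_{\rm R}\ln u+B_{\rm V}\ln w=C$. Stationarity of the Lagrangian gives $1=\lambda B_{\rm R}\gamma_{\rm R}/u=\lambda B_{\rm V}\gamma_{\rm V}/w$, with $\gamma_{\rm R}=\phi_0 d_{\rm R}^{-\alpha}h_{\rm R}/(\phi_1 N_0 B_{\rm R})$ and $\gamma_{\rm V}$ analogous; eliminating $\lambda$ yields the matching relation $u/w=B_{\rm R}\gamma_{\rm R}/(B_{\rm V}\gamma_{\rm V})=d_{\rm R}^{-\alpha}h_{\rm R}/(d_{\rm V}^{-\alpha}h_{\rm V})$, which is independent of the target rate.

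Substituting this ratio into the binding constraint leaves a single linear equation in $\ln w$; solving it expresses $w$ and $u$ as the channel ratio $d_{\rm R}^{-\alpha}h_{\rm R}/(d_{\rm V}^{-\alpha}h_{\rm V})$ raised to $-B_{\rm R}/(B_{\rm R}+B_{\rm V})$ and to $B_{\rm V}/(B_{\rm R}+B_{\rm V})$ respectively, each multiplied by a power of $\mathrm e^{\chi}$, and back-substituting $p_{\rm R}=(u-1)/\gamma_{\rm R}$ and $p_{\rm V}=(w-1)/\gamma_{\rm V}$ reproduces the closed forms (29)--(30). The step needing the most care is the box constraint: formulas (29)--(30) simply clip the interior stationary point with $\{\cdot\}_0^{p_{\rm RM}}$ and $\{\cdot\}_0^{p_{\rm VM}}$, yet the two powers are coupled through a single active constraint, so clamping one variable at a bound should in principle re-optimize the other. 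I would therefore present these as the interior KKT solution projected onto the box, i.e.\ the intended \emph{suboptimal} candidate $\psi_3$; global optimality is recovered only in the final comparison of $\psi_1,\psi_2,\psi_3$, so the independent clipping need not itself be globally optimal.
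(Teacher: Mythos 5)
Your proposal is correct and follows essentially the same route as the paper: reduce the outage constraint (16) to a deterministic rate constraint via the truncated-Gaussian CDF as in \emph{Proposition}~1, observe convexity, and solve the resulting program by KKT stationarity plus the binding rate constraint, recovering (29)--(30) with $\chi$ playing the role of the normalized rate target. Your closing remark that the independent clipping to $[0,p_{\rm VM}]$ and $[0,p_{\rm RM}]$ only projects the interior stationary point and need not re-optimize the coupled variable is a fair and slightly more careful reading than the paper, which applies the same clipping without comment and defers optimality to the final comparison of $\psi_1,\psi_2,\psi_3$.
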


\begin{proof}
To obtain the power allocation of cooperative HD map transmission, we use the method in {\it Proposition} 1 to transform (12) into
\begin{align}
{p_{\rm{R}}} \!\ge\! g\!\left(\! {{p_{\rm{V}}}} \!\right),
\end{align}
where
\begin{align}
\!\!\!\!g\!\left(\! {{p_{\rm{V}}}} \!\right) \!=\! \frac{{{\phi _1}{N_0}{B_{\rm{R}}}}}{{{\phi _0}d_{\rm{R}}^{\! -\! \alpha }{h_{\rm{R}}}}}\!\!\left[\!\! {{{\rm e}^{{\chi}}}{{\!\left(\! {1 \!+\! \frac{{{\phi _0}d_{\rm{V}}^{ \!-\! \alpha }{h_{\rm{V}}}{p_{\rm{V}}}}}{{{\phi _1}{N_0}{B_{\rm{v}}}}}} \!\right)\!}^{\! -\! \frac{{{B_{\rm{V}}}}}{{{B_{\rm{R}}}}}}} \!\!\!-\! 1} \!\right]\!\!,
\end{align}
with
\begin{align}
\!\!\!\!&\chi \!=\! \frac{{ \{\sigma {\Phi ^{ \!-\! 1}}\left[ {1 \!-\! \delta \left( {1 \!-\! \Phi \left( { \!-\! {\textstyle{\mu  \over \sigma }}} \right)} \right)} \right] \!+\! \mu {\rm{\!-\! }}{\textstyle{\vartheta \over v}}\}v\ln 2}}{{{B_{\rm{r}}}}},\\
\!\!\!\!&\vartheta \!=\! \! -\! \frac{{{B_{\rm{r}}}}}{{\ln 2}}\!\sqrt {\frac{1}{{\tau {B_{\rm{r}}}}}} Q_{\rm{G}}^{ \!-\! 1}\left(\! {\varepsilon _{\rm{r}}^{{\rm{c,r}}}}\! \right) \!\!-\!\! \frac{{{B_{\rm{v}}}}}{{\ln 2}}\!\sqrt {\frac{1}{{\tau {B_{\rm{v}}}}}} Q_{\rm{G}}^{\! -\! 1}\left(\! {\varepsilon _{\rm{v}}^{{\rm{c,v}}}} \!\right)\!\!.
\end{align}
\par
Without consideration of (13), the partial Lagrange function of problem in (9) is given by
\begin{align}
\!\!\notag {\mathcal L}\!\left(\! {{p_{\rm{R}}},{p_{\rm{V}}},\lambda }\! \right)\!&\!=\!{p_{\rm{R}}} \!+\! {p_{\rm{V}}} \!-\! {\lambda _1}{p_{\rm{R}}} \!-\! {\lambda _2}{p_{\rm{V}}} \!+\! {\lambda _3}\!\left(\! {{p_{\rm{R}}} \!-\! {p_{{\rm{RM}}}}} \!\right)\\
&\!+\!{\lambda _4}\left( {{p_{\rm{V}}} \!-\! {p_{{\rm{VM}}}}} \right) \!-\! {\lambda _5}\left[ {{p_{\rm{R}}} \!-\! g\left( {{p_{\rm{V}}}} \right)} \!\right]\!\!,
\end{align}
where ${\bm \lambda} {\rm{ = }}\left\{ {{\lambda _i} \ge 0,i = 1,..., 5} \right\}$ is the Lagrange multiplier vector. We can prove that the first order derivative of $g\left( {{p_{\rm{V}}}} \right)$ with respect to $p_{\rm{V}}$ is less than zero and the second-order derivative of $g\left( {{p_{\rm{V}}}} \right)$ with respect to $p_{\rm{V}}$ is greater than zero, therefore the considered optimization problem is a convex problem. Then, the optimization problem can be solved based on Karush-Kuhn-Tucker (KKT) conditions. With ${{ p}_{\rm{R}}} \ne 0$and ${{ p}_{\rm{V}}} \ne 0$, the KKT conditions are given by

\begin{align}
&0 < {{ p}_{\rm{R}}} \le {p_{{\rm{RM}}}},\\
&0 < {{ p}_{\rm{V}}} \le {p_{{\rm{VM}}}},\\
&{{ p}_{\rm{R}}} \!-\! g\left( {{p_{\rm{V}}}} \right)\!\ge\! 0,\\
&{{ \lambda }_i} \ge 0,i = 1,...,5,\\
&{{ \lambda }_1}{\rm{ = }}{{ \lambda }_2}{\rm{ = }}0,\\
&{{ \lambda }_3}\left( {{{ p}_{\rm{R}}} - {p_{{\rm{RM}}}}} \right){\rm{ = }}0,\\
&{{ \lambda }_4}\left( {{{ p}_{\rm{V}}} - {p_{{\rm{VM}}}}} \right){\rm{ = }}0,\\
&{{ \lambda }_5}\left({ p}_{\rm{R}}- g\left({p_{\rm{V}}} \right)\right)=0,\\  
&\nabla {\mathcal L}=0.
\end{align}

Then, substituting (34) into (43) results in
\begin{align}
&{\rm{1}} \!+\! {{ \lambda } _3} \!-\! {{ \lambda } _5}{\rm{ = }}0,\\
&{\rm{1}} \!+\! {{ \lambda } _4} \!-\! {{ \lambda } _5}\frac{{d_{\rm{V}}^{\!-\! \alpha }{h_{\rm{V}}}}}{{d_{\rm{R}}^{ \!-\! \alpha }{h_{\rm{R}}}}}{e^{{{\rm{A}}_3}}}{\!\left(\! {1 \!+\! \frac{{{\phi _0}d_{\rm{V}}^{ \!-\! \alpha }{h_{\rm{V}}}{{ P}_{\rm{V}}}}}{{{\phi _1}{N_0}{B_{\rm{V}}}}}} \!\right)^{ \!\!\!-\! \frac{{{B_{\rm{V}}}}}{{{B_{\rm{R}}}}} \!-\! 1}} \!\!\!\!\!\!\!=\! 0.
\end{align}
Finally, according to KKT conditions (35)-(45), we can obtain the power allocation given in (27).
\end{proof}
\subsection{Optimal Power Allocation}
By comprehensively considering the three cases in Sec. III-A, the optimal power allocation can be expressed by 
\begin{align}
{\psi ^*}\left( {p_{\rm{V}}^*,p_{\rm{R}}^*} \right){\rm{ = argmin}}&\left\{ {{p_{{\rm{V}\it{i}}}} + {p_{{\rm{R}\it{i}}}},\left( {{p_{{\rm{V}\it{i}}}},{p_{{\rm{R}\it{i}}}}} \right)} \right.\\
&\left. { \in \left\{ {{\psi _1},{\psi _2},{\psi _3}} \right\}} \right\}.
\end{align}

\begin{table}[!b]
	\centering
	\scriptsize
	\renewcommand{\arraystretch}{1.4}
	\vspace{-1.5em}
	\caption{Simulation Parameters}
	\vspace{0pt}
	\label{parameters}
	\begin{tabular}{|p{4.7cm}<{\centering}|p{2.1cm}<{\centering}|}\hline   
		\footnotesize \textbf {Parameter} & \footnotesize \textbf {Value}\\\hline    
		Channel constant ($\phi_0$) &$10^{-3}$ \\\hline
		path-loss exponent ($\alpha$) &3\\\hline
		SNR loss coefficient ($\phi_1$) &1.5\\\hline
		Duration of transmission ($\tau$) &$10^{-3}$ s\\\hline
		Noise power spectrum density (${N_0}$) &-174 dBm/Hz\\\hline
		RSU and vehicle bandwidth ($B_{\rm{R}}$, $B_{\rm{V}}$) & 1 MHz, 0.5 MHz\\\hline
		Maximum transmit power of RSU and vehicle ($p_{\rm{RM}}$,$p_{\rm{VM}}$) &40 dBm, 36 dBm \\\hline
		Communication range of V2V ($d_{\rm{VM}}$) &150 m\\\hline
		Decoding error probabilities ($ {\varepsilon _{\rm{V}}^{{\rm{C,V}}}}\!={\varepsilon _{\rm{R}}^{{\rm{C,R}}}} $) &$10^{-4}$\\\hline
		Mean and variance of HD map data volume ($\mu$, $\sigma$) &0.8 kbits/m, $100$\\\hline
		Maximum violation probability ($\delta$)	&$10^{-4}$ \\\hline
	\end{tabular}    
\end{table}

\section{Simulation Results and Analysis}
In this section, numerical results and analysis are provided to show the performance of the proposed cooperative V2X transmission scheme.

\subsection{Parameter Setup}
The simulation parameters are listed in Table \uppercase\expandafter{\romannumeral1}. The road length covered by the RSU is 432 m, and the distance between the RSU and the road is 250 m, the lane width is 3.5 m in Fig. 1. In order to simplify the simulation, we assume that the two vehicles have the same speed. 
\par

\subsection{Results and Analysis}


Fig. 2 shows the average transmit power of RSU versus different locations of the targeted vehicle. With the horizontal change of the targeted vehicle, both the distances $d_{\rm R} $, $d_{\rm V} $ first decrease and then gradually increase. This change rule gives rise to the same trend of total transmit power. When the location of the targeted vehicle between 144 m - 288 m, we have $d_{\rm V} < d_{\rm VM}$; therefore V2V communication exists and the average transmit power obviously decreases. It indicates that the proposed collaborative transmission can significantly reduce the power consumption.


Fig. 3 and Fig. 4 illustrate the average transmission rate  and average power allocation versus vehicle speed, respectively. The transmission rate requirement is related to both vehicle speed and HD map data volume per meter. Due to the power limitation of RSU, the V2I transmission only can not satisfy the transmission rate requirement when the vehicle speed exceeds 22 m/s. On the other hand, limited by the communication range of V2V, V2V transmission only can not meet the transmission rate requirement of the HD map. However, under the proposed cooperative V2X transmission, the transmission rate requirement can be satisfied when the vehicle speed is under 30 m/s, which approaches the maximum speed limitation of vehicle. Moreover, Fig. 4 shows that the average transmit power under the cooperative V2X transmission is significantly reduced than that under V2I transmission only. Therefore, the proposed cooperative transmission can achieve the HD map transmission with low power consumption.
\par

\begin{figure}[!t]
	\centering
	\includegraphics[width=0.44\textwidth]{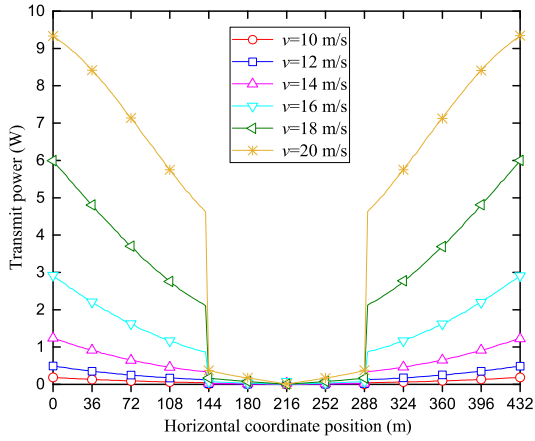}
	\caption{Average transmit power v.s. the location of the targeted vehicle.}	
	\vspace{-8pt}
\end{figure}

\section{CONCLUSION}
This paper studied the power efficient transmission of HD map, which is significant for automatic driving. In order to reduce power consumption while guaranteeing the transmission rate requirement, a cooperative V2V/V2I transmission was proposed for HD map transmission. To realize the cooperative transmission scheme, the power allocation at both RSU and vehicle are given through solving three sub optimization problems. Finally, the simulation results indicated that the proposed scheme can significantly reduce the total power consumption compared to the V2I transmission scheme while meeting the transmission rate requirement of HD map.\par
\begin{figure}[!t]
	\centering
	\includegraphics[width=0.44\textwidth]{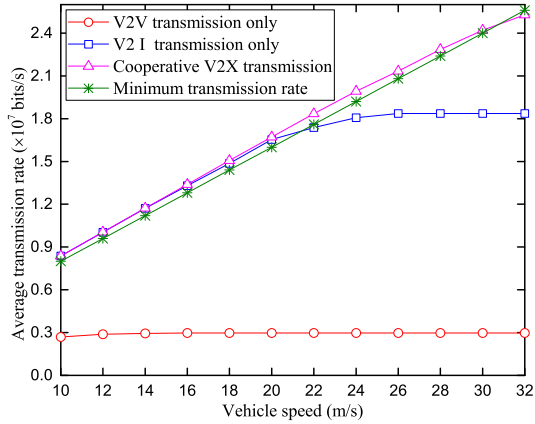}
	\caption{
		Average transmission rate v.s. vehicle speed.}
	\vspace{-12pt}	
\end{figure}

\section*{Acknowledgement}
This work was supported by the China Natural Science Funding under Grant 61731004.
\begin{figure}[!t]
	\centering
	\includegraphics[width=0.44\textwidth]{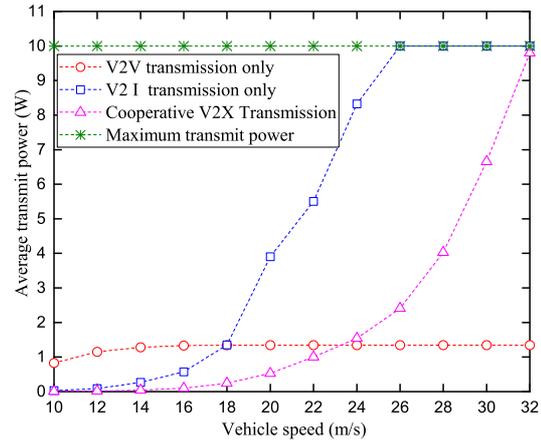}
	\caption{
		Average transmit power v.s. vehicle speed.}	
	\vspace{-10pt}
\end{figure}

\end{document}